\theoremstyle{plain}
\newtheorem{proposition}[theorem]{Proposition}
\title{Faster Betweenness Centrality Updates in Evolving Networks\footnote{This work was partially supported by DFG grant ME-3619/3-1 (FINCA) and Br 2158/11-1 within the SPP 1736 \emph{Algorithms for Big Data}. A. S. acknowledges support by the RISE program of
DAAD.}}
\author[1]{Elisabetta Bergamini}
\author[1]{Henning Meyerhenke}
\author[2]{Mark Ortmann}
\author[3]{Arie Slobbe}
\affil[1]{Karlsruhe Institute of Technology (KIT), Germany\\
  \texttt{\{elisabetta.bergamini, meyerhenke\}\,@\,kit.edu}}
 \affil[2]{University of Konstanz, Germany\\
  \texttt{mark.ortmann\,@\,uni-konstanz.de}}
  \affil[3]{Australian National University, Australia\\
  \texttt{arieslobbe1\,@\,gmail.com}}
\authorrunning{E. Bergamini, H. Meyerhenke, M. Ortmann and A. Slobbe} 
\subjclass{G.2.2 Graph Theory}
\keywords{Graph algorithms, shortest paths, distances, dynamic algorithms}
\newcommand{\etal}{et al.\xspace}
\def\zeit{\number\shorthour:\ifnum\shortminute<10 0\number\shortminute
\else\number\shortminute\fi}
\date{\today}
\begin{document}
\maketitle

\begin{abstract}
Finding central nodes is a fundamental problem in network analysis. Betweenness centrality is a well-known measure which quantifies the importance of a node based on the fraction of shortest paths going though it.
Due to the dynamic nature of many today's networks, algorithms that quickly update centrality scores have become a necessity. 
For betweenness, several dynamic algorithms have been proposed over the years, targeting different update types (incremental- and decremental-only, fully-dynamic).
In this paper we introduce a new dynamic algorithm for updating betweenness centrality after an edge insertion or an edge weight decrease. 
Our method is a combination of two independent contributions: a faster algorithm for updating pairwise distances as well as number of shortest paths, and a faster algorithm for updating dependencies. Whereas the worst-case running time of our algorithm is the same as recomputation, our techniques considerably reduce the number of operations performed by existing dynamic betweenness algorithms.
Our experimental evaluation on a variety of real-world networks reveals that our approach is significantly faster than the current state-of-the-art dynamic algorithms, approximately by one order of magnitude on average.
\end{abstract}
\section{Introduction}
Over the last years, increasing attention has been devoted to the analysis of complex networks. 
A common sub-problem for many graph based applications is to identify the most central nodes in a network. 
Examples include facility location~\cite{Koschutzki:2005fk}, marketing strategies~\cite{Kiss2008233} and identification of key infrastructure nodes 
as well as disease propagation control and crime prevention~\cite{DBLP:journals/socnet/BellAC99}.  As the meaning of ``central'' heavily depends on the context, various centrality measures have been proposed (see~\cite{DBLP:journals/im/BoldiV14} for an overview). Betweenness centrality is a well-known measure which ranks nodes according to their participation in the shortest paths of the network. Formally, the betweenness of a node $v$ is defined as $c_B(v) = \sum_{s \neq v \neq t} \frac{\sigma_{st}(v)}{\sigma_{st}}$, where $\sigma_{st}$ is the number of shortest paths between two nodes $s$ and $t$ and $\sigma_{st}(v)$ is the number of these paths that go through node $v$. The fastest algorithm for computing betweenness centrality is due to Brandes~\cite{Brandes01betweennessCentrality}, which we refer to as \textsf{BA}, from Brandes's algorithm. This algorithm is composed of two parts: an augmented APSP (all-pairs shortest paths) step, where pairwise distances and shortest paths are computed, and a dependency accumulation step, where the actual betweenness scores are computed. 
The augmented APSP is computed by running a SSSP (single-source shortest paths) computation from each node $s$ and the dependency accumulation is performed by traversing only once the edges that lie in shortest paths between $s$ and the other nodes. 
Therefore, \textsf{BA} requires $\Theta(|V| |E|)$ time on unweighted and $\Theta(|V| |E| + |V|^2 \log |V|)$ time on weighted graphs (i.e. the time of running $n$ SSSPs).

Networks such as the Web graph and social networks continuously undergo changes. Since an update in the graph might affect only a small fraction of nodes, recomputing betweenness with \textsf{BA} after each update would be very inefficient. For this reason, several dynamic algorithms have been proposed over the last years~\cite{DBLP:conf/socialcom/GreenMB12,kourtellis2014scalable,DBLP:conf/asunam/KasWCC13}. 
As \textsf{BA}, these approaches usually solve two sub-tasks: the update of the augmented APSP data structures and the update of the betweenness scores.
Although none of these algorithms is in general asymptotically faster than recomputation with \textsf{BA}, good speedups over \textsf{BA} have been reported for some of them, in particular for~\cite{DBLP:conf/asunam/KasWCC13} and~\cite{kourtellis2014scalable}. Nonetheless, an exhaustive comparison of these methods is missing in the literature.

In our paper, we only consider \textit{incremental} updates, i.e. edge insertions or edge weight decreases (node insertions can be handled treating the new node as an isolated node and adding its neighboring edges one by one). Although it might seem reductive to only consider these kinds of updates, it is important to note that several real-world dynamic networks evolve only this way and do not shrink. For example, in a co-authorship network, a new author (node) or a new edge (coauthored publication) might be added to the network, but existing nodes or edges will not disappear. Another possible application is the centrality maximization problem, which consists in finding a set of edges that, if added to the graph, would maximize the centrality of a certain node. The problem can be approximated with a heuristic~\cite{DBLP:conf/wea/CrescenziDSV15}, which requires to add several edges to the graph and to recompute distances after each edge insertion.


\paragraph*{Our contribution}
We present a new algorithm for updating betweenness centrality after an edge insertion or an edge weight decrease.
Our method is a combination of two contributions: a new dynamic algorithm for the augmented APSP, and a new approach for updating the betweenness scores. Based on properties of the newly-created shortest paths, our dynamic APSP algorithm efficiently identifies the node pairs affected by the edge update (i.e. those for which the distance and/or number of shortest paths change as a consequence of the update). The betweenness update method works by accumulating values in a fashion similar to that of \textsf{BA}. However, differently from \textsf{BA}, our method only processes nodes that lie in shortest paths between affected pairs. 

We compare our new approach with two of the dynamic algorithms for which the best speedups over recomputation have been reported in the literature, i.e. \textsf{KWCC}~\cite{DBLP:conf/asunam/KasWCC13} and \textsf{KDB}~\cite{kourtellis2014scalable}. Compared to them, our algorithm for the augmented APSP update is asymptotically faster on dense graphs: $O(|V|^2)$ in the worst case versus $O(|V||E|)$. This is due to the fact that we iterate over the edges between affected nodes only once, whereas \textsf{KDB} and \textsf{KWCC} do it several times. Moreover, our dependency update works also for weighted graphs (whereas \textsf{KDB} does not) and it is asymptotically faster than the dependency update of \textsf{KWCC} for sparse graphs ($O(|V||E| + |V| \log |V|)$ in the worst case versus $O(|V|^3)$).

Our experimental evaluation on a variety of real-world networks reveals that our approach is significantly faster than both \textsf{KDB} and \textsf{KWCC}, on average by a factor 14.7 and 7.4, respectively.

\section{Preliminaries}
\subsection{Notation}
Let $G = (V, E, \omega)$ be a graph with node set $V = V(G)$, edge set $E = E(G)$  and edge weights $\omega: E \rightarrow \mathbb{R}_{> 0}$. In the following we will use $n := |V|$ to denote the number of nodes and $m := |E|$ for the number of edges.
Let $d(s, t)$ be the shortest-path distance between any two nodes $s, t \in V$. On a shortest path from $s$ to $t$ in $G$, we say $w$ is a \textit{predecessor} of $t$, or $t$ is a \textit{successor} of $w$, if $(w,t) \in E$ and $d(s,w) + \omega(w,t) = d(s,t)$. We denote the set of predecessors of $t$ as $P_s(t)$.
For a given source node $s \in V$, we call the graph composed of the nodes reachable from $s$ and the edges that lie in at least one shortest path from $s$ to any other node the \emph{SSSP DAG of $s$}.
We use $\sigma_{st}$ to denote the number of shortest paths between $s$ and $t$ and we use $\sigma_{st}(v)$ for the number of shortest paths between $s$ and $t$ that go through $v$. Then, the betweenness centrality $c_B(v)$ of a node $v$ is defined as: $c_B(v) = \sum_{s \neq v \neq t} \frac{\sigma_{st}(v)}{\sigma_{st}}$.

Our goal is to keep track of the betweenness scores of all nodes after an update $(u, v, \omega'(u, v))$ in the graph, which could either be an edge insertion or an edge weight decrease. We use $G' = (V, E', \omega')$ to denote the new graph after the edge update and $d'$, $\sigma'$ and $P'$ to denote the new distances, numbers of shortest paths and sets of predecessors, respectively. Also, we define the set of \textit{affected sources} $S(t)$ of a node $t\in V$ as $\{ s \in V : d(s, t) > d'(s, t) \vee \sigma_{st} \neq \sigma'_{st}\}$. Analogously, we define the set of affected targets of $s \in V$ as $T(s) := \{ t \in V : d(s, t) > d'(s, t) \vee \sigma_{st} \neq \sigma'_{st}\}$.
In the following we will assume $G$ to be directed. However, the algorithms can be easily extended to undirected graphs.

\subsection{Related Work}
The basic idea of dynamic betweenness algorithms is to keep track of the old betweenness scores (and additional data structures) and efficiently update the information after some modification in the graph. 
Based on the type of updates they can handle, dynamic algorithms are classified as \textit{incremental} (only edge insertions and weight decreases), \textit{decremental} (only edge deletions and weight increases) or \textit{fully-dynamic} (all kinds of edge updates).
However, one commonality of all these approaches is that they build on the techniques used by \textsf{BA}~\cite{Brandes01betweennessCentrality}, which we therefore describe in Section~\ref{sec:brandes} in more detail.

The approach proposed
by Green \etal~\cite{DBLP:conf/socialcom/GreenMB12} for unweighted graphs 
maintains all previously calculated betweenness values and
additional information, such as pairwise distances, number of shortest paths and lists of predecessors of each node in the shortest paths from each source node $s \in V$. Using this information,
the algorithm tries to limit the recomputation to the nodes whose
betweenness has been affected by the edge insertion. Kourtellis
\etal~\cite{kourtellis2014scalable} modify the
approach by Green \etal~\cite{DBLP:conf/socialcom/GreenMB12} in
order to reduce the memory requirements from $O(nm)$ to $O(n^2)$. 
Instead of being stored, the predecessors are recomputed every time the algorithm requires them. The authors show that not only using less memory allows them to scale to larger graphs, but their approach (which we refer to as \textsf{KDB}, from the authors's initials) turns out to be also faster than the one by Green \etal~\cite{DBLP:conf/socialcom/GreenMB12} in practice (most likely because of the cost of maintaining the data structure of the algorithm by Green \etal).

Kas \etal~\cite{DBLP:conf/asunam/KasWCC13} extend an existing algorithm for
the dynamic all-pairs shortest paths (APSP) problem by Ramalingam and Reps~\cite{DBLP:journals/tcs/RamalingamR96}
to also update betweenness scores. Differently from the previous two approaches, this algorithm can handle also weighted graphs. Although good speedups have been reported for this approach, no experimental evaluation compares its performance with that of the approaches by Green \etal~\cite{DBLP:conf/socialcom/GreenMB12} and  Kourtellis \etal~\cite{kourtellis2014scalable}. We refer to this algorithm as \textsf{KWCC}, from the authors's initials.

Nasre \etal~\cite{DBLP:conf/mfcs/NasrePR14} compare the distances between each node pair before and after the update and then recompute the dependencies from scratch as in \textsf{BA} (see Section~\ref{sec:brandes}). Although this algorithm is faster than recomputation on some graph classes (i.e.\ when only edge insertions are allowed and the graph is sparse and weighted), it was shown in~\cite{DBLP:conf/alenex/BergaminiMS15} that its practical performance is much worse than that of the algorithm proposed by Green \etal~\cite{DBLP:conf/socialcom/GreenMB12}. This is quite intuitive, since recomputing all dependencies requires $\Omega(n^2)$ time independently of the number of nodes that are actually affected by the insertion.

Pontecorvi and Ramachandran~\cite{DBLP:conf/isaac/PontecorviR15} extend existing fully-dynamic APSP algorithms with new data structures to update \textit{all} shortest paths and then recompute dependencies as in \textsf{BA}. To our knowledge, this algorithm has never been implemented, probably because of the quite complicated data structures it requires. Also, since it recomputes dependencies from scratch as Nasre \etal~\cite{DBLP:conf/mfcs/NasrePR14}, we expect its practical performance to be similar.

Differently from the other algorithms, the approach by Lee \etal~\cite{DBLP:journals/isci/LeeCC16} is not based on dynamic APSP algorithms. The idea is to decompose the graph into its biconnected components and then recompute the betweenness values from scratch only for the nodes in the component affected by the update. Although this allows for a smaller memory requirement ($\Theta(m)$ versus $\Omega(n^2)$ needed by the other approaches), the speedups on recomputation reported in~\cite{DBLP:journals/isci/LeeCC16} are significantly worse than those reported for example by Kourtellis \etal~\cite{kourtellis2014scalable}.

To summarize, \textsf{KDB}~\cite{kourtellis2014scalable} and \textsf{KWCC}~\cite{DBLP:conf/asunam/KasWCC13} are the most promising methods for a comparison with our new algorithm. For this reason, we will describe them in more detail in Section~\ref{sec:dyn-apsp} and Section~\ref{sec:dependency} and evaluate them in our experiments.

Since computing betweenness exactly can be too expensive for large networks, several approximation algorithms and heuristics have been introduced in the literature~\cite{DBLP:conf/esa/BorassiN16,DBLP:conf/alenex/GeisbergerSS08,DBLP:journals/datamine/RiondatoK16,DBLP:conf/kdd/RiondatoU16} and, recently, also dynamic algorithms that update an approximation of betweenness centrality have been proposed~\cite{DBLP:journals/im/BergaminiM16,DBLP:conf/alenex/BergaminiMS15,DBLP:journals/pvldb/HayashiAY15,DBLP:conf/kdd/RiondatoU16}.
However, we will not consider them in our experimental evaluation since our focus here is on exact methods.

\section{Brandes's algorithm (\textsf{BA})}
\label{sec:brandes}
Betweenness centrality can be easily computed in time $\Theta(n^3)$ by simply applying its definition. In 2001, Brandes proposed an algorithm (\textsf{BA})~\cite{Brandes01betweennessCentrality} which requires time $\Theta(nm)$ for unweighted and $\Theta(n(m + n\log n))$ for weighted graphs, i.e. the time of computing $n$ single-source shortest paths (SSSPs).
The algorithm is composed of two parts: the \textit{augmented APSP} computation phase based on $n$ SSSPs and the \textit{dependency accumulation} phase. As dynamic algorithms based on \textsf{BA} build on these two steps as well, we explain them now in more detail.
\paragraph*{Augmented APSP} In this first part, \textsf{BA} needs to perform an \textit{augmented} APSP, meaning that instead of simply computing distances between all node pairs $(s, t)$, it also finds the number of shortest paths $\sigma_{st}$ and the set of predecessors $P_s(t)$. This can be done while computing an SSSP from each node $s$ (i.e. BFS for unweighted and Dijkstra for weighted graphs). When a node $w$ is extracted from the SSSP (priority) queue, \textsf{BA} computes $P_s(w)$ as $\{v : (v, w) \in E \, \wedge \, d(s,w) = d(s, v) + \omega(v,w)\}$ and $\sigma_{sw}$ as $\sum_{v \in P_s(w)} \sigma_{sv}$.

\paragraph*{Dependency accumulation}
Brandes defines the one-side dependency of a node $s$ on a node $v$ as $\delta_{s \bullet}(v) := \sum_{t \neq v} \sigma_{st}(v)/ \sigma_{st}$.
It can be proven~\cite{Brandes01betweennessCentrality} that
\begin{equation}
\label{eq:dep}
\delta_{s \bullet}(v) = \sum_{w : v \in P_s(w)} \frac{\sigma_{sv}}{\sigma_{sw}} (1 + \delta_{s \bullet}(w)), \ \ \ \ \forall s, v \in V
\end{equation}
Intuitively, the term $ \delta_{s \bullet}(w)$ in Eq.~(\ref{eq:dep}) represents the contribution of the sub-DAG (of the SSSP DAG of $s$) rooted in $w$ to the betweenness of $v$, whereas the term $1$ is the contribution of $w$ itself. For all nodes $v$ such that $\{w : v \in P_s(w)\} = \emptyset$ (i.e. the nodes that have no successors), we know that $\delta_{s \bullet}(v) = 0$. Starting from these nodes, we can compute $\delta_{s \bullet}(v)\ \forall v \in V$ by ``walking up'' the SSSP DAG rooted in $s$, using Eq.~(\ref{eq:dep}). Notice that it is fundamental that we process the nodes in order of decreasing distance from $s$, because to correctly compute $\delta_{s \bullet}(v)$, we need to know $\delta_{s \bullet}(w)$ for all successors of $v$. This can be done by inserting the nodes into a stack as soon as they are extracted from the SSSP (priority) queue in the first step. The betweenness of $v$ is then simply computed as $\sum_{s \neq v} \delta_{s \bullet}(v)$.

\section{Dynamic augmented APSP}
\label{sec:dyn-apsp}
As mentioned in Section~\ref{sec:brandes}, also dynamic algorithms based on \textsf{BA} build on its two steps. In the following, we will see how \textsf{KDB}~\cite{kourtellis2014scalable} and \textsf{KWCC}~\cite{DBLP:conf/asunam/KasWCC13} update the augmented APSP data structures (i.e. distances and number of shortest paths) after an edge insertion or a weight decrease. One difference between these two approaches is that \textsf{KDB} does not store the predecessors explicitly, whereas \textsf{KWCC} does. However, since in~\cite{kourtellis2014scalable} it was shown that keeping track of the predecessors only introduces overhead, we report a slightly-modified version of \textsf{KWCC} that recomputes them ``on the fly'' when needed (we will also use this version in our experiments in Section~\ref{sec:experiments}).
 We will then introduce our new approach in Section~\ref{sec:new-apsp}.
\subsection{Algorithm by Kourtellis et al. (\textsf{KDB})}
\label{sec:kdb1}
Let $(u, v)$ be the new edge inserted into $G$ (we recall that \textsf{KDB} works only on unweighted graphs, so edge weight modifications are not supported). For each source node $s \in V$, there are three possibilities: $(i)$ $d(s, u) = d(s, v)$, $(ii)$ $|d(s, u) - d(s, v)| = 1$ and $(iii)$ $|d(s, u) - d(s, v)| > 1$ (in case $(ii)$ and $(iii)$, let us assume that $d(s, u) < d(s, v)$ without loss of generality). We recall that $d$ is the distance \textit{before} the edge insertion.

In the first case, it is easy to see that the insertion does not affect any shortest path rooted in $s$, and therefore nothing needs to be updated for $s$. 

In case $(ii)$, the distance between $s$ and the other nodes is not affected, since there already
existed an alternative shortest-path from $s$ to $v$. However, the insertion creates new shortest paths from $s$ to to $v$ and consequently to all the nodes $t$ in the sub-DAG (of the SSSP DAG from $s$) rooted in $v$. To account for this, for each of these nodes $t$, we add $\sigma_{su} \cdot \sigma_{vt}$ to the old value of $\sigma_{st}$ (where $\sigma_{su} \cdot \sigma_{vt}$ is the number of new shortest paths between $s$ and $t$ going through $(u, v)$).

Finally, in case $(iii)$, a part of the sub-DAG rooted in $v$ might get closer to $s$. This case is handled with a BFS traversal rooted in $v$. In the traversal, all neighbors $y$ of nodes $x$ extracted from the BFS queue are examined and all the ones such that $d(s, y) \geq d'(s, x)$ are also enqueued. For each traversed node $y$, the new distance $d'(s, y)$ is computed as $\min_{z : (z, y) \in E} d'(s, z) + 1$ and the number of shortest paths $\sigma'_{sy}$ as $\sum_{z \in P'_s(y)} \sigma_{sz}$.

\subsection{Algorithm by Kas et al. (\textsf{KWCC})}
\label{sec:kwcc1}
\begin{wrapfigure}{r}{0.31\textwidth}
  \begin{center}
    \vspace{-5ex}
    \includegraphics[width=0.22\textwidth]{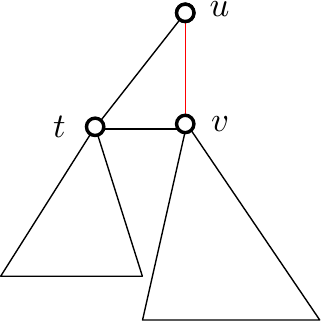}
  \end{center}
  \vspace{-4ex}
    \caption{Insertion of $(u,v)$.}
      \vspace{-2ex}
  \label{fig:comparison}
\end{wrapfigure}
\textsf{KWCC} updates the augmented APSP based on a dynamic APSP algorithm by Ramalingam and Reps~\cite{DBLP:journals/tcs/RamalingamR96}. Instead of checking for each source $s$ whether the new edge (or the weight decrease) changes the SSSP DAG rooted in $s$, \textsf{KWCC} first identifies the \textit{affected sources} $S = \{ s: d(s, v) \geq d(s, u) + \omega'(u, v)\}$. These are exactly the nodes for which there is some change in the SSSP DAG. The affected sources are identified by running a pruned BFS rooted in $u$ on $G$ transposed (i.e. the graph obtained by reversing the direction of edges in $G$). For each node $s$ traversed in the BFS, \textsf{KWCC} checks whether the neighbors of $s$ are also affected sources and, if not, it does not continue the traversal from them.
Notice that even on weighted graphs, a (pruned) BFS is sufficient since we already know all distances to $v$ and we can basically sidestep the use of a priority queue. 

Once all affected sources $s$ are identified, \textsf{KWCC} starts a pruned BFS rooted in $v$ for each of them. In the pruned BFS, only nodes $t$ such that $d(s,t) \geq d(s, u) + \omega'(u, v) + d(v, t)$ are traversed (the \textit{affected targets} of $s$). The new distance $d'(s, t)$ is set to $d(s, u) + \omega'(u, v) + d(v, t)$ and the new number of shortest paths $\sigma'(s, t)$ is set to $\sum_{z \in P'_s(t)} \sigma_{sz}$ as in \textsf{KDB}.
Compared to \textsf{KDB}, the augmented APSP update of \textsf{KWCC} requires fewer operations. First, it efficiently identifies the affected sources instead of checking all nodes. Second, in case $(iii)$, \textsf{KDB} might traverse more nodes than \textsf{KWCC}. For example, assume $(u, v)$ is a new edge and the resulting SSSP DAG of $u$ is as in Figure~\ref{fig:comparison}. Then, \textsf{KWCC} will prune the BFS in $t$, since $d(u, t) < d(u, v) + d(v, t)$, skipping all the SSSP DAGs rooted in $t$. On the contrary, \textsf{KDB} will traverse the whole subtree rooted in $t$, although neither the distances nor the number of shortest paths from $u$ to those nodes are affected. The reason for this will be made clearer in Section~\ref{sec:kdb2}.

\subsection{Faster augmented APSP update}
\label{sec:new-apsp}
\begin{wrapfigure}{r}{0.3\textwidth}
  \begin{center}
    \vspace{-4ex}
    \includegraphics[width=0.20\textwidth]{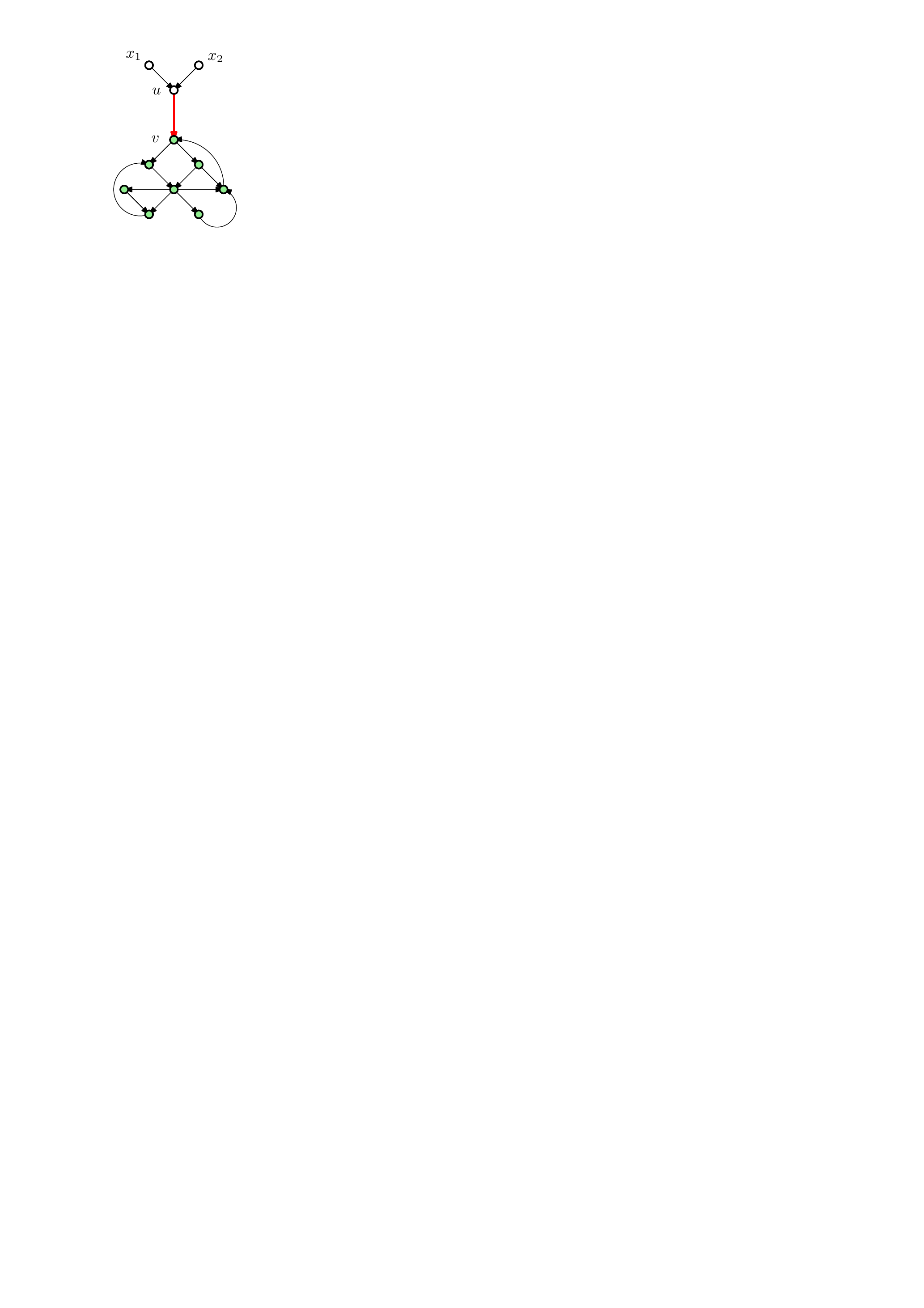}
  \end{center}
  \vspace{-5ex}
  \caption{Affected targets (in green) and affected sources ($x_1, x_2, u$).}
  \label{fig:new-edge}
\end{wrapfigure}
To explain our idea for improving the APSP update step, let us start with an example, shown
in Figure~\ref{fig:new-edge}. 
The insertion of $(u, v)$ decreases the distance from nodes $x_1, x_2, u$ to 
all the nodes shown in green. 
\textsf{KWCC} would first identify the affected sources $S = \{x_1, x_2, u\}$ and, for each of them, run a pruned BFS rooted in $v$. 
This means we are repeating almost exactly the same procedure for each of the affected sources. We clearly have to update the distances and number of shortest paths between each affected source and the affected targets (and this cannot be avoided). However, \textsf{KWCC} also goes through the outgoing edges of each affected target multiple times, leading to a worst-case running time of $O(mn)$.\footnote{Notice that this is true also for \textsf{KDB}, with the difference that \textsf{KDB} starts a BFS from each node instead of first identifying the affected sources and that it also visits additional nodes.}
Our basic idea is to avoid this redundancy and is based on
the following proposition (a similar result was proven also in~\cite{Lin:1991:DSP:105582.105591}).
\begin{proposition}
\label{lem:source-nodes}
Let $t \in V$ and $y \in P_v(t)$ be given. Then, $S(t) \subseteq S(y)$. 
\end{proposition}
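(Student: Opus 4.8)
The plan is to reduce the statement to a single inequality governing which pairs are affected. The starting point is that $(u,v)$ is an \emph{incoming} edge of $v$, so no shortest path that starts at $v$ can use it (it would have to revisit $v$, contradicting simplicity under positive weights). Hence distances and shortest-path counts out of $v$ are untouched by the update: $d'(v,\cdot)=d(v,\cdot)$, $\sigma'_{v\cdot}=\sigma_{v\cdot}$, and in particular $P'_v=P_v$ (so it does not matter whether $P_v(t)$ in the statement is read in $G$ or in $G'$). A consequence used repeatedly: whenever the update changes $d(s,t)$ or $\sigma_{st}$ for some pair, a ``responsible'' new shortest $s$--$t$ path must traverse $(u,v)$, hence pass through $v$.

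The first real step is to prove the characterization: $s\in S(t)$ if and only if there is a shortest $s$--$t$ path in $G'$ that uses $(u,v)$, and this in turn holds exactly when $d'(s,u)+\omega'(u,v)+d(v,t)\le d(s,t)$ (together with the finiteness conditions $d'(s,u),d(v,t)<\infty$, which in the setting of the proposition is automatic since $y\in P_v(t)$ forces $d(v,t)<\infty$). For the forward direction, a distance decrease — or a strict increase in the number of shortest paths at unchanged distance — yields a $G'$-path through $(u,v)$ of length $\le d(s,t)$, which decomposes at $v$ into a shortest $s$--$v$ part ending with $(u,v)$ and a shortest $v$--$t$ part, giving $d'(s,u)+\omega'(u,v)+d(v,t)\le d'(s,t)\le d(s,t)$. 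For the converse, concatenating a shortest $s$--$u$ path in $G'$, the edge $(u,v)$, and a shortest $v$--$t$ path produces an $s$--$t$ path of length $d'(s,u)+\omega'(u,v)+d(v,t)\le d(s,t)$; if its length is strictly below $d(s,t)$ we have a distance decrease, and otherwise it equals $d'(s,t)=d(s,t)$, so it is a shortest $s$--$t$ path in $G'$ that uses $(u,v)$ whereas every old shortest $s$--$t$ path avoids $(u,v)$ (else it would have become strictly shorter), so all old shortest paths survive and this one is extra, forcing $\sigma'_{st}>\sigma_{st}$.

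With the characterization in hand, the proposition is a two-line computation. From $y\in P_v(t)$ we get $d(v,y)=d(v,t)-\omega(y,t)$, and the triangle inequality in $G$ gives $d(s,t)\le d(s,y)+\omega(y,t)$. Hence, if $s\in S(t)$,
$d'(s,u)+\omega'(u,v)+d(v,y)=\bigl(d'(s,u)+\omega'(u,v)+d(v,t)\bigr)-\omega(y,t)\le d(s,t)-\omega(y,t)\le d(s,y)$,
while $d'(s,u)<\infty$ and $d(v,y)<\infty$; applying the characterization to the pair $(s,y)$ gives $s\in S(y)$, which is exactly $S(t)\subseteq S(y)$.

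I expect the main obstacle to be not the inequality but the bookkeeping inside the characterization, specifically: (i) the weight-decrease case, where one must argue that no \emph{old} shortest path already uses $(u,v)$ when the distance is unchanged; (ii) the sub-case $d'(s,t)=d(s,t)$ with $\sigma_{st}\neq\sigma'_{st}$, where one must simultaneously show that every old shortest path persists and that at least one new one through $(u,v)$ appears, so the count strictly increases; and (iii) keeping track of unreachable pairs, which is what makes the extra clause $d'(s,u)<\infty$ necessary in the general characterization even though it is harmless here.
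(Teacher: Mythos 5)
Your proof is correct, and it is organized differently from the paper's. The paper argues directly on the pair $(s,y)$: it first shows $y\in P'_s(t)$ by establishing $d'(s,y)=d(s,u)+\omega'(u,v)+d(v,y)$, and then splits into cases according to whether $d(s,t)$ strictly decreased or only $\sigma_{st}$ changed, with a further sub-case on whether $y$ was already in $P_s(t)$ before the update. You instead isolate a standalone characterization --- $s\in S(t)$ if and only if $d(s,u)+\omega'(u,v)+d(v,t)\le d(s,t)$ (with the reachability caveats) --- and then reduce the proposition to the one-line chain $d(s,u)+\omega'(u,v)+d(v,y)\le d(s,t)-\omega(y,t)\le d(s,y)$ obtained from $d(v,y)=d(v,t)-\omega(y,t)$ and the triangle inequality. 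The case bookkeeping you anticipate as the ``main obstacle'' (old paths surviving when $d'(s,t)=d(s,t)$, the strict increase of $\sigma_{st}$, unreachable pairs) is exactly the content the paper also has to supply, just distributed through its cases rather than concentrated in one lemma. Your route buys two things: the characterization is precisely the test $d^{\star}(s,t)\le d(s,t)$ that Algorithm~\ref{edge_insertion} applies in Line~\ref{line:new-dist}, so proving it once certifies both the proposition and the algorithm's membership test; and it removes the sub-case on $y\in P_s(t)$ versus $y\notin P_s(t)$. What it does not give you explicitly is the fact $y\in P'_s(t)$, which the paper's version establishes along the way and which matters for the predecessor bookkeeping later; it does follow from your equality $d'(s,y)=d(s,u)+\omega'(u,v)+d(v,y)$ plus Eq.~(\ref{eq:proof}), but you would need to state it if the surrounding text relies on it.
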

\begin{proof}
Let $s$ be any node in $S(t)$, i.e. either $d'(s,t) = d(s,t)$ and $\sigma'_{st} \neq \sigma_{st}$ (case $(i)$), or $d'(s,t) < d(s,t)$ (case $(ii)$). 
We want to show that $s \in S(y)$.

 Before proving this, we show that $y$ has to be in $P'_s(t)$. In fact, if $s \in S(t)$, there have to be shortest paths between $s$ and $t$ going through $(u, v)$, i.e. $d'(s, t) = d(s, u) + \omega'(u, v) + d(v, t)$. On the other hand, we know $y \in P_v(t)$ and thus 
\begin{equation}
\label{eq:proof}
d'(s, t) = d(s, u) + \omega'(u, v) + d(v, y) + \omega(y, t).
\end{equation}
Now, $d(s, u) + \omega'(u, v) + d(v, y)$ cannot be larger than $d'(s, y)$, or this would mean that $d'(s, t) > d'(s, v) + \omega(y, t)$, which contradicts the triangle inequality. Also, $d(s, u) + \omega'(u, v) + d(v, y)$ cannot be smaller than $d'(s, y)$ by definition of distance. Thus, $d'(s, y) = d(s, u) + \omega'(u, v) + d(v, y)$. If we substitute this in Eq.~(\ref{eq:proof}), we obtain $d'(s, t) = d'(s, y) + \omega(y, t)$, which means $y \in P'_s(t)$.

Now, let us consider case $(i)$. We have two options: either $y$ was a predecessor of $t$ from $s$ also before the edge update, i.e. $y \in P_s(t)$, or it was not. If it was not, it means $d(s, y) + \omega(y, t) > d(s, t) = d'(s, t) = d'(s, y) + \omega(y, t)$, which implies $d(s, y) > d'(s, y)$ and thus $s \in S(y)$.
If it was, we can similarly show that $d(s, y) = d'(s, y)$. Since we have seen before that $ d'(s, y) = d(s, u) + \omega'(u, v) + d(v, y)$, there has to be at least one new shortest path from $s$ to $y$ in $G'$ going through $(u, v)$, which means $\sigma'_{sy} > \sigma_{sy}$ and therefore $s \in S(y)$.

Case $(ii)$ can be easily proven by contradiction. We know $d(s,t) \leq d(s,y) + \omega(y,t)$ (by the triangle inequality) and that $\omega'(y,t) = \omega(y,t)$. Thus, if it were true that $d(s,y) = d'(s,y) $ then
\begin{equation}
d(s,t) \leq d(s,y) + \omega(y,t) = d'(s,y) + \omega(y,t) = d'(s,t),
\end{equation}
which contradicts our hypothesis that $d'(s,t) < d(s,t)$ (case $(ii)$). Thus, $d(s,y) \neq d'(s,y) $. Since pairwise distances in $G'$ can only be equal to or shorter than pairwise distances in $G$, $d(s,y) \neq d'(s,y)$ implies $d(s,y) > d'(s,y) $ and thus $s \in S(y)$.

\end{proof}

In particular, this implies that $S(t) \subseteq S(v)$ for each $t \in T(u)$. Consequently, it is sufficient to compute $S(v)$ and $T(u)$ once via two pruned BFSs.
Our approach is described in Algorithm~\ref{edge_insertion}. The pruned BFS to compute $S(v)$ is performed in Line~\ref{aff-sources}. Then, a pruned BFS from $v$ is executed, whereby for each $t \in T(u)$ we store one of its predecessors $p(t)$ in the BFS (Line~\ref{predecessor}). 

Let $d^{\star}(s, t)$ be the length of a shortest path between $s$ and $t$ going through $(u, v)$, i.e. $d^{\star}(s, t) := d(s, u) + \omega'(u, v) + d(v, t)$. To finally compute $S(t)$ all that is left to do is to test whether $d^{\star}(s, t) \leq d(s, t)$ for each $s \in S(p(t))$ once we remove $t$ from the queue (Lines~\ref{loop-pred-begin} - \ref{loop-pred-end}). 
Note that this implies that $S(p(t))$ was already computed. 
In case $d^{\star}(s,t) < d(s,t)$, the path from $s$ to $t$ via edge $(u,v)$ is shorter than before and therefore we set $d'(s,t)$ to $d^{\star}(s,t)$ and $\sigma'_{st}$ to $\sigma_{su} \cdot \sigma_{vt}$, since all new shortest paths now go through $(u,v)$). 
Also in case of equality ($d^{\star}(s,t) = d(s,t)$), $s$ is in $S(t)$, since its number of shortest paths has changed. Consequently we set $\sigma'_{st}$ to $\sigma_{st} + \sigma_{su} \cdot \sigma_{vt}$ (since in this case also old shortest paths are still valid). 
If $d^{\star}(s, t) > d(s, t)$, the edge $(u, v)$ does not lie on any shortest path from $s$ to $t$, hence $s \notin S(t)$ (and $s$ is not added to $S(t)$ in Lines~\ref{line:add-sources-begin} - \ref{line:add-sources-end}). 
 \IncMargin{1em}
\begin{algorithm}[tb]
\begin{footnotesize}
\SetKwInOut{Input}{Input}\SetKwInOut{Output}{Output}
\SetKwInOut{Assume}{Assume}
\SetKwFunction{insert}{insert}
\SetKwFunction{findAffectedSources}{findAffectedSources}
\SetKwData{visited}{visited}
\SetKwData{Pred}{$P_v$}
\SetKwData{source}{$S$}
\Input{Graph $G=(V, E)$, edge insertion/weight decrease $(u,v,\omega'(u,v))$, $d(s, t)$, $\sigma_{st},\  \forall (s, t) \in V^2$}
\Output{Updated $d'(s, t)$, $\sigma '_{st},\  \forall (s, t) \in V^2$}
\Assume{Initially $d'(s, t) = d(s, t)$ and $\sigma'_{st} = \sigma_{st} \ \ \forall (s, t) \in V^2$}
$vis(v) \leftarrow \text{false}\ \forall v \in V$\;
\If {$\omega'(u,v) \leq d(u,v)$} { 
    $S(v) \leftarrow$ \findAffectedSources{$G, (u,v,\omega'(u,v))$}\; \label{aff-sources}
  $d(u,v) \leftarrow \omega'(u, v)$\; \label{up-dist}
  $Q \leftarrow \emptyset$\;
  $p(v) \leftarrow v$\;
  $Q.push(v)$\;
  $vis(v) \leftarrow \text{true}$\;
  \While {$Q.length() > 0$}{ 
   \label{phase2-begin}
    $t = Q.front()$\; \label{dequeue}
    \ForEach {$s \in \source(p(t))$} { 
    \label{loop-pred-begin}
      \If {$d(s,t) \geq d(s,u) + \omega'(u,v)+ d(v,t)$} { 
      \label{line:new-dist}
      \If {$d(s,t) > d(s,u) + \omega'(u,v)+ d(v,t)$} { 
        		$d '(s,t) \leftarrow d(s,u) + \omega'(u,v)+ d(v,t)$\; \label{line:up-dist}
		$\sigma '_{st} \leftarrow 0$\;
	}
	$\sigma '_{st} \leftarrow \sigma '_{st} + \sigma_{su} \cdot \sigma_{vt}$\;
        \If{$t \neq v$} {\label{line:add-sources-begin}
        $S(t).insert(s)$\;
        }\label{line:add-sources-end}
    } 
    }
    \label{loop-pred-end}
    \ForEach {$w$ s.t. $(t, w) \in E$} { 
    \label{out-neighbors-begin}
      \If {not $vis(w)$ and $d(u, w) \geq \omega'(u, v) + d(v, w)$}{ 
      \label{out-neighbors-if} 		
        $Q.push(w)$\; \label{enqueue}
          $vis(w) \leftarrow \text{true}$\;
          $p(w) \leftarrow t$\; \label{predecessor}
      } 
      \label{out-neighbors-end}
    } 
  
  } 
  \label{phase2-end}
} 
\end{footnotesize}
\caption{Augmented APSP update}
\label{edge_insertion}
\end{algorithm}
\DecMargin{1em}

\section{Dynamic dependency accumulation}
\label{sec:dependency}
After updating distances and number of shortest paths, dynamic algorithms need to update the betweenness scores. This means increasing the score of all nodes that lie in new shortest paths, but also decreasing that of nodes that used to be in old shortest paths between affected nodes. Again, we will first see how \textsf{KDB} and \textsf{KWCC} update the dependencies and then we will present our new approach in Section~\ref{sec:new-dep}.

\subsection{Algorithm by Kourtellis et al. (\textsf{KDB})}
\label{sec:kdb2}
In addition to $d$ and $\sigma$, \textsf{KDB} keeps track of the old dependencies $\delta_{s \bullet}(v)\ \forall s, v \in V$. The dependency update is done in a way similar to \textsf{BA} (see Section~\ref{sec:brandes}). Also in this case, nodes $v$ are processed in decreasing order of their new distance $d'(s, v)$ from $s$ (otherwise it would not be possible to apply Eq.~(\ref{eq:dep})). However, in this case we would only like to process nodes for which the dependency has actually changed. To do this, while still making sure that the nodes are processed in the right order, \textsf{KDB} replaces the stack used in \textsf{BA} with a bucket list. Every node that is traversed during the APSP update is inserted into the bucket list in a position equal to its new distance from s. Then, nodes are extracted from the bucket list starting from the ones with maximum distance. Every time a node $v$ is extracted, we compute its new dependency as $\delta'_{s \bullet}(v) = \sum_{w : v \in P'_s(w)} \frac{\sigma'_{sv}}{\sigma'_{sw}} (1 + \delta'_{s \bullet}(w))$. Since we are processing the nodes in order of decreasing new distance, we can be sure that $\delta'_{s \bullet}(v)$ is computed correctly.
The score of $v$ is then updated by adding the new dependency $\delta'_{s \bullet}(v)$ and subtracting the old $\delta_{s \bullet}(v)$, which was previously stored.
Also, all neighbors $y \in P'_{s}(v)$ that are not in the bucket list yet are inserted at level $d'(s, y) = d'(s, v) - 1$. Notice that, in the example in Figure~\ref{fig:comparison}, all the nodes in the sub-DAG of $t$ are necessary to compute the new dependency of $t$, although they have not been affected by the insertion. This is why they are traversed during the APSP update.

\subsection{Algorithm by Kas et al. (\textsf{KWCC})}
\label{sec:kwcc2}
\textsf{KWCC} does not store dependencies. On the contrary, for every node pair $(s, t)$ for which either $d(s, t)$ or $\sigma_{st}$ has been affected by the insertion, all the nodes in the new shortest paths and the ones in the old shortest paths between $s$ and $t$ are processed.  More specifically, starting from $t$, all the nodes $y \in P'_s(t)$ are inserted into a queue. When a node $y$ is extracted, we increase its betweenness by $\sigma'(s, y) \cdot \sigma' (y, t) / \sigma'(s, t)$ (i.e. the fraction of shortest paths between $s$ and $t$ going through $y$). Then, also $y$ enqueues all nodes in $P'_s(y)$ and the process is repeated until we reach $s$. Decreasing the betweenness of nodes in the old paths is done in a similar fashion, with the only difference that nodes in $P_s(y)$ are enqueued (instead of nodes in $P'_s(y)$) and that $\sigma(s, y) \cdot \sigma (y, t) / \sigma(s, t)$ is subtracted from the scores of processed nodes. Notice that the worst-case complexity of this approach is $O(n^3)$, whereas that of \textsf{KDB} is $O(nm)$. This cubic running time is due to the fact that, for each affected node pair $(s, t)$ (at most $\Theta(n^2))$, there could be up to $\Theta(n)$ nodes lying in either one of the old or new shortest paths between $s$ and $t$. (In the running time analysis of~\cite{kourtellis2014scalable}, this is represented by the term $|\sigma_{old}| I$.) This means that, if many nodes are affected, \textsf{KWCC} can even be slower than recomputation with \textsf{BA}. On the other hand, we have seen in Section~\ref{sec:kwcc1} that \textsf{KDB} also processes nodes for which the betweenness has not changed (see Figure~\ref{fig:comparison} and its explaination), which in some cases might result in a higher running time than \textsf{KWCC}. 

\subsection{Faster betweenness update}
\label{sec:new-dep}
We propose a new approach for updating the betweenness scores. As \textsf{KWCC}, we do not store the old dependencies (resulting in a lower memory requirement) and we only process the nodes whose betweenness has actually been affected. However, we do this by accumulating contributions of nodes only once for each affected source, in a fashion similar to \textsf{KDB}. 
For an affected source $s \in S$ and for any node $v \in V$, let us define $\Delta_{s, \bullet}(v)$ as $\sum_{t \in T(s)} \sigma_{st}(v)/\sigma_{st}$. This is the contribution of nodes whose old shortest paths from $s$ went through $v$, but which have been affected by the edge insertion. Analogously, we can define $\Delta'_{s, \bullet}(v)$ as $\sum_{t \in T(s)} \sigma'_{st}(v)/\sigma'_{st}$. Then, the new dependency $\delta'_{s, \bullet} (v)$ can be expressed as:
\begin{equation}
\delta'_{s, \bullet} (v) = \delta_{s, \bullet} (v) - \Delta_{s, \bullet}(v) + \Delta'_{s, \bullet}(v)
\end{equation}
Notice that for all nodes $t \notin T(s)$, $\sigma'_{st} = \sigma_{st}$ and $\sigma'_{st}(v) = \sigma_{st}(v)$, therefore their contribution to $\delta_{s, \bullet} (v)$ is not affected by the edge update.
The new betweenness $c'_B(v)$ can then be computed as $c_B(v) - \sum_{s \in S} \Delta_{s, \bullet}(v) +  \sum_{s \in S} \Delta'_{s, \bullet}(v)$.
The following theorem allows us to compute $\Delta_{s, \bullet}(v)$ and $\Delta'_{s, \bullet}(v)$ efficiently.
 \begin{theorem}
\label{theo}
For any $s \in T, v \in V$:
\[
  \Delta_{s, \bullet}(v)= \sum_{w : v \in P_s(w) \wedge w \in T(s)} \sigma_{sv}/\sigma_{sw} (1 + \Delta_{s, \bullet}(w)) + \sum_{w : v \in P_s(w) \wedge w \notin T(s)} \sigma_{sv}/\sigma_{sw} \cdot \Delta_{s, \bullet}(w)\ .
\]
Similarly:
\[
  \Delta'_{s, \bullet}(v)= \sum_{w : v \in P'_s(w) \wedge w \in T(s)} \sigma'_{sv}/\sigma'_{sw} (1 + \Delta'_{s, \bullet}(w)) + \sum_{w : v \in P'_s(w) \wedge w \notin T(s)} \sigma'_{sv}/\sigma'_{sw} \cdot \Delta'_{s, \bullet}(w)\ .
\]
\end{theorem}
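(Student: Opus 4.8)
The plan is to derive both identities by specializing Brandes's dependency recurrence to the restricted set of targets $T(s)$, treating $\Delta_{s,\bullet}(v)$ as a "partial dependency" of $s$ on $v$ that only counts contributions coming from affected targets. I would first recall that for a fixed source $s$ and any node $w$, the quantity $\sigma_{st}(w)/\sigma_{st}$ for a target $t$ that is a proper descendant of $w$ in the SSSP DAG of $s$ can be split along the predecessors of $t$: if $w \in P_s(t)$ then every shortest $s$-$t$ path through $v$ and through the edge $(w,t)$ is counted, and otherwise the paths through $v$ reaching $t$ must pass through some successor of $v$ that is itself a predecessor-ancestor of $t$. This is exactly the combinatorial fact underlying Eq.~(\ref{eq:dep}); the only new ingredient is that we restrict the outer sum to $t \in T(s)$.

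Concretely, I would start from the definition $\Delta_{s,\bullet}(v) = \sum_{t \in T(s)} \sigma_{st}(v)/\sigma_{st}$ and, exactly as in Brandes's original proof, write $\sigma_{st}(v)/\sigma_{st} = \sum_{w : v \in P_s(w)} (\sigma_{sv}/\sigma_{sw})\,[\sigma_{sw}(?)\dots]$ — more precisely, use the identity $\sigma_{st}(v) = \sum_{w : v \in P_s(w)} \sigma_{sv} \cdot (\text{number of shortest } w\text{-}t \text{ paths, with the convention for } w=t)$, which after dividing by $\sigma_{st}$ yields the familiar $\sum_{w: v \in P_s(w)} (\sigma_{sv}/\sigma_{sw})(\mathbf{1}[w=t] + \sigma_{st}(w)/\sigma_{st})$ when one sums over all $t$. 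Then I would swap the order of summation between $t \in T(s)$ and $w$ with $v \in P_s(w)$, and split the resulting inner sum $\sum_{t \in T(s)} (\mathbf{1}[w=t] + \sigma_{st}(w)/\sigma_{st})$ according to whether $w \in T(s)$ or not. If $w \in T(s)$, the indicator term $\mathbf{1}[w=t]$ contributes $1$ (giving the "$1+$" in the first sum) and the remaining $\sum_{t\in T(s)} \sigma_{st}(w)/\sigma_{st}$ is by definition $\Delta_{s,\bullet}(w)$. If $w \notin T(s)$, the indicator term contributes nothing, and we are left with $\Delta_{s,\bullet}(w)$ alone — this is the second sum. For the primed identity the argument is verbatim the same with $G'$, $\sigma'$, $P'_s$ in place of $G$, $\sigma$, $P_s$, using the SSSP DAG of $s$ in $G'$.

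The main obstacle I anticipate is a subtle point about the index set: in the mixed case $w \notin T(s)$, one must be sure that $\Delta_{s,\bullet}(w)$ (defined as a sum over $t \in T(s)$) is still the right object to plug in — i.e.\ that summing $\sigma_{st}(w)/\sigma_{st}$ over $t \in T(s)$ is legitimate even when $w$ itself is not affected. This is fine because the definition of $\Delta_{s,\bullet}(w)$ never required $w \in T(s)$; but one should check that the recurrence is well-founded, i.e.\ that it can be unrolled by processing nodes in order of decreasing $d(s,\cdot)$ (respectively $d'(s,\cdot)$), since every $w$ appearing on the right-hand side is a successor of $v$ and hence strictly farther from $s$. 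A second mild subtlety is the base case: for a node $v$ with no successors that are ancestors of any $t \in T(s)$, both sums on the right are empty and $\Delta_{s,\bullet}(v) = 0$, which agrees with the definition since then $\sigma_{st}(v) = 0$ for all such $t$ (or $t=v$, but $v \notin T(s)$ is not forced — in the degenerate case $v \in T(s)$ and $v$ is a sink, $\sigma_{sv}(v)/\sigma_{sv}$ is conventionally $0$ in the one-sided dependency, consistent with Brandes's convention). Beyond these bookkeeping checks, the proof is a routine re-derivation of Eq.~(\ref{eq:dep}) with the outer summation truncated to $T(s)$, so I would keep it short and refer back to~\cite{Brandes01betweennessCentrality} for the underlying path-counting identity.
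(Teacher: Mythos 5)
Your proof is correct and follows essentially the same route as the paper: both start from the definition of $\Delta_{s,\bullet}(v)$, apply Brandes's path-counting decomposition over the successors $w$ of $v$, swap the order of summation, and then split the sum over $w$ according to whether $w \in T(s)$ (which is exactly what produces the extra ``$1+$'' term only in the first sum). Your indicator $\mathbf{1}[w=t]$ plays the same role as the paper's explicit case distinction $t=w$ versus $t\neq w$ for $\sigma_{st}(v,w)$, and your remarks on well-foundedness and the base case are sound but not needed for the identity itself.
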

 \begin{proof}
 We prove only the equation for $\Delta_{s, \bullet}(v)$, the one for $\Delta'_{s, \bullet}(v)$ can be proven analogously. 
 Let t be any node in $T(s)$, $t \neq v$. Then, $\sigma_{st}(v)/\sigma_{st}$ can be rewritten as $\sum_{w : v \in P_s(w)} \sigma_{st}(v, w)/\sigma_{st}$, where $\sigma_{st}(v, w)$ is the number of shortest paths between $s$ and $t$ going through both $v$ and $w$.
Then:
 \[
 \Delta_{s, \bullet}(v) = \sum_{t \in T(s)} \sigma_{st}(v)/\sigma_{st} = \sum_{t \in T(s)} \sum_{w : v \in P_s(w)} \sigma_{st}(v, w)/\sigma_{st} = \sum_{w : v \in P_s(w)} \sum_{t \in T(s)} \sigma_{st}(v, w)/\sigma_{st}\ .
 \]
 Now, of the $\sigma_{sw}$ paths from $s$ to $w$, there are $\sigma_{sv}$ many that also go through $v$. Therefore, for $t \neq w$, there are $\frac{\sigma_{sv}}{\sigma_{sw}} \cdot \sigma_{st}(w)$ shortest paths from $s$ to $t$ containing both $v$ and $w$, i.e. $\sigma_{st}(v, w) = \frac{\sigma_{sv}}{\sigma_{sw}} \cdot \sigma_{st}(w)$. On the other hand, if $t = w$, $\sigma_{st}(v, w)$ is simply $\sigma_{sv}$. Therefore, we can rewrite the equation above as:
 \[
 \begin{split}
%
&  \sum_{w : v \in P_s(w) \wedge w \in T(s)} \left\{ \frac{\sigma_{sv}}{\sigma_{sw}} + \sum_{t \in T(s) - \{w\}}  \frac{\sigma_{st}(v, w)}{\sigma_{st}} \right\}
 + \sum_{w : v \in P_s(w) \wedge w \notin T(s)} \sum_{t \in T(s)} \frac{\sigma_{st}(v, w)}{\sigma_{st}}\\
  & = \sum_{w : v \in P_s(w) \wedge w \in T(s)} \frac{\sigma_{sv}}{\sigma_{sw}} \left(1 + \sum_{t \in T(s) - \{w\}}  \frac{\sigma_{st}(w)}{\sigma_{st}} \right)
 + \sum_{w : v \in P_s(w) \wedge w \notin T(s)} \frac{\sigma_{sv}}{\sigma_{sw}} \sum_{t \in T(s)} \frac{\sigma_{st}(w)}{\sigma_{st}}\\
& = \sum_{w : v \in P_s(w) \wedge w \in T(s)} \frac{\sigma_{sv}}{\sigma_{sw}} (1 + \Delta_{s, \bullet}(w)) + \sum_{w : v \in P_s(w) \wedge w \notin T(s)} \frac{\sigma_{sv}}{\sigma_{sw}} \cdot \Delta_{s, \bullet}(w)\ .
 \end{split}
 \]
 \end{proof}
Theorem~\ref{theo} allows us to accumulate the dependency changes in a way similar to \textsf{BA}. To compute $ \Delta_{s, \bullet}$, we need to process nodes in decreasing order of $d(s, \cdot)$, whereas to compute $\Delta'_{s, \bullet}$ we need to process them in decreasing order of $d'(s, \cdot)$. To do this, we use two priority queues $PQ_s$ and $PQ'_s$ (if the graph is unweighted, we can use bucket lists as the ones used in \textsf{KDB}). Notice that nodes $w$ such that $\sigma_{st}(w) = 0 \wedge \sigma'_{st}(w) = 0 \  \forall t \in T(s)$ do not need to be added to the queue. $PQ_s$ and $PQ'_s$ are filled with all nodes in $T(s)$ during the APSP update in Algorithm~\ref{edge_insertion}. In $PQ_s$, nodes $w$ are inserted with priority $d(s, w)$ and $PQ'_s$ with priority $d'(s, w)$. Algorithm~\ref{algo:dyn_bc} shows how we decrease betweenness of nodes that lied in old shortest paths from $s$ (notice that this is repeated for each $s \in S(v)$). In Lines~\ref{line:theo1} - \ref{line:theo2}, Theorem~\ref{theo} is applied to compute $\Delta_{s, \bullet}(y)$ for each predecessor $y$ of $w$. Then, $y$ is also enqueued and this is repeated until $PQ_s$ is empty (i.e. when we reach $s$). The betweenness update of nodes in the new shortest paths works in a very similar way. The only difference is that $PQ'_s$ is used instead of $PQ$, that $d'$ and $\sigma '$ are used instead of $d$ and $\sigma$ and that $\Delta'_{s, \bullet}$ is added to $c_B$ and not subtracted in Line~\ref{dec-betw}. At the end of the update, $\sigma$ is set to $\sigma'$ and $d$ is set to $d'$.

In undirected graphs, we can notice that $\sum_{s \in S(w)} \Delta_{s, \bullet}(w) = \sum_{t \in T(w)} \Delta_{t, \bullet}(w)$. Thus, to account also for the changes in the shortest paths between $w$ and the nodes in $T(w)$, $2 \Delta_{s, \bullet}$ is subtracted from $c_B(w)$ in Line~\ref{dec-betw} (and analogously $2 \Delta'_{s, \bullet}$ is added in the update of nodes in the new shortest paths).

\begin{algorithm}[t]
 \begin{footnotesize}
\LinesNumbered
$\Delta_{s, \bullet}(u) \leftarrow 0\ \forall u \in V$\;
\While{$PQ_s \neq \emptyset$}{
	$w \leftarrow PQ_s$.extractMax()\; \label{extractMax}
	$c_B(w) \leftarrow c_B(w) - \Delta_{s, \bullet}(w)$\; \label{dec-betw}
	\ForEach {$y$ s.t. $(y, w) \in E$}{ \label{bc-loop-begin}
		\If{$y \neq s$ and $d(s, w) = d(s, y) + \omega(y, w)$}{
			\If{$w \in T(s)$}{ \label{line:theo1}
				$c \leftarrow \frac{\sigma_{sy}}{\sigma_{sw}} \cdot (1 + \Delta_{s, \bullet}(w))$\;
			} \Else {
				$c \leftarrow \frac{\sigma_{sy}}{\sigma_{sw}} \cdot \Delta_{s, \bullet}(w)$\;
			}  \label{line:theo2}
			\If{$ y \notin PQ_s$} {
			\text{Insert $y$ into $PQ_s$ with priority $d(s, y)$}\; \label{bc-enqueue}
			}
			$\Delta_{s, \bullet}(y) \leftarrow \Delta_{s, \bullet}(y) + c$\;
		}
	} \label{bc-loop-end}
}
\end{footnotesize}
\caption{Betweenness update for nodes in old shortest paths}
\label{algo:dyn_bc}
\end{algorithm}

\section{Time complexity}
\label{sec:complexity}
Let us study the complexity of our two new algorithms for updating APSP and betweenness scores described in Section~\ref{sec:new-apsp} and Section~\ref{sec:new-dep}, respectively. 
We define the \textit{extended size} $||A||$ of a set of nodes $A$ as the sum of the number of nodes in $A$ and the number of edges that have a node of $A$ as their endpoint. Then, the following holds.
\begin{theorem}
\label{theo_compl1}
The running time of Algorithm~\ref{edge_insertion} for updating the augmented APSP after an edge insertion (or weight decrease) $(u, v, \omega'(u,v))$ is $\Theta(||S(v)|| + ||T(u)|| + \sum_{y \in T(u)} |S(p(y))|)$, where $p(y)$ can be any node in $P_u(y)$.
\end{theorem}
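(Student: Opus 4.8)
The plan is to trace Algorithm~\ref{edge_insertion} once and charge each operation to one of the three summands in the bound. Every line outside the call to \texttt{findAffectedSources} (Line~\ref{aff-sources}) and the main \texttt{while} loop (Lines~\ref{phase2-begin}--\ref{phase2-end}) is executed $O(1)$ times, with one exception: the one-time reset of the $vis$ array costs $\Theta(n)$ as literally written, so I would first note that it is meant to be done with the usual timestamp trick (or a hash set keyed by node), which turns it into $O(1)$ plus $O(1)$ per node actually touched and hence folds into the other terms. (When $\omega'(u,v)>d(u,v)$ the outer test fails, nothing happens, and $S(v)=T(u)=\emptyset$, so assume $\omega'(u,v)\le d(u,v)$.) What remains is to pin down exactly which nodes and edges the two pruned BFSs touch, and to bound the constant-time work they do per touched node.

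For the summand $||S(v)||$: \texttt{findAffectedSources} is the pruned BFS of Section~\ref{sec:kwcc1} rooted at $u$ on $G$ transposed; it enqueues a node only after the test $d(\cdot,v)\ge d(\cdot,u)+\omega'(u,v)$ succeeds, so its visited set is contained in $S(v)$. For the reverse inclusion, if $s\in S(v)$ and $r$ lies on a shortest $s$-to-$u$ path in $G$, then $d(s,u)-d(s,r)=d(r,u)$ and, by the triangle inequality (as in the proof of Proposition~\ref{lem:source-nodes}), $d(r,v)\ge d(s,v)-d(s,r)\ge d(s,u)+\omega'(u,v)-d(s,r)=d(r,u)+\omega'(u,v)$, so $r\in S(v)$ too; hence the BFS, which follows in-edges, reaches $s$ from $u$ without leaving $S(v)$. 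The visited set is therefore exactly $S(v)$, each visited node scans its in-edges once, and the cost is $\Theta(|S(v)|+\sum_{s\in S(v)}\deg^{-}(s))=\Theta(||S(v)||)$.

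For the other two summands: the \texttt{while} loop is a FIFO BFS from $v$ whose enqueue test (Line~\ref{out-neighbors-if}) reads, for $w\neq v$, $d(u,w)\ge\omega'(u,v)+d(v,w)$, i.e.\ $d^{\star}(u,w)\le d(u,w)$, i.e.\ $w\in T(u)$ --- here one must observe that although Line~\ref{up-dist} overwrites $d(u,v)$, the root $v$ is pushed separately and the loop never touches $d(u,w)$ for $w\neq v$. So its dequeued set is contained in $T(u)$, and it equals $T(u)$ by induction on the value $d(v,\cdot)$: if $t\in T(u)$ and $p\in P_v(t)$ then $d^{\star}(u,p)=d^{\star}(u,t)-\omega(p,t)\le d(u,t)-\omega(p,t)\le d(u,p)$, so $p\in T(u)$ is dequeued first and then discovers $t$ along $(p,t)$. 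Per dequeued $t$, Lines~\ref{out-neighbors-begin}--\ref{out-neighbors-end} scan the out-edges of $t$ exactly once, summing to $\Theta(||T(u)||)$; and Lines~\ref{loop-pred-begin}--\ref{loop-pred-end} loop once over $S(p(t))$, where $p(t)$ is the BFS-predecessor recorded in Line~\ref{predecessor} ($p(v)=v$, with $S(v)$ supplied by the first phase). By the FIFO property $p(t)$ is dequeued before $t$, so its set $S(\cdot)$ is already complete when $t$ needs it. Each element of $S(p(t))$ costs $O(1)$: the test of Line~\ref{line:new-dist}, an optional update of $d'(s,t)$ and $\sigma'_{st}$, and at most one insertion into $S(t)$ (plus $O(1)$ more if one simultaneously fills the queues $PQ_s,PQ'_s$ of Section~\ref{sec:new-dep}, since the number of affected pairs with target $t$ is at most $|S(p(t))|$). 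Summing over $t\in T(u)$ gives $\Theta(\sum_{y\in T(u)}|S(p(y))|)$; adding the three parts yields the stated bound.

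The step I expect to be the main obstacle is the exact-characterization: since the bound is an equality, bounding the work from above is not enough, and one must also show that the two pruned BFSs visit \emph{neither more nor fewer} nodes than $S(v)$ and $T(u)$ --- the completeness half (every affected source, resp.\ every affected target, is really reached) being the delicate one, settled by the shortest-path-closure arguments above. Two minor but real points to be careful with: first, on directed graphs the $\Omega$ direction scans only the out-edges of $T(u)$ and only the in-edges of $S(v)$, so the matching lower bound should be read with that orientation of $||\cdot||$; second, all bookkeeping outside the two traversals (the $vis$ reset and the incremental construction of $S(\cdot)$, $T(\cdot)$ and the priority queues) must be shown to stay within the three summands rather than adding an extra $\Theta(n)$, which is handled by the lazy-initialization remark together with $|S(t)|\le|S(p(t))|$.
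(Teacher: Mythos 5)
Your proof is correct and follows the same decomposition as the paper's own argument: charge the first pruned BFS to $\Theta(||S(v)||)$, the second traversal (excluding the inner loop) to $\Theta(||T(u)||)$, and the per-target scan of $S(p(t))$ to the third summand. You additionally supply the exact-visitation arguments in both directions (via closure of $S(v)$ and $T(u)$ under shortest-path predecessors) and flag the $vis$-initialization and edge-orientation caveats, all of which the paper's terser proof simply asserts or leaves implicit.
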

\begin{proof}
The function \texttt{findAffectedSources} in Line~\ref{aff-sources} identifies the set of affected sources starting a BFS in $v$ and visiting only the nodes $s \in S(v)$. This takes $\Theta(||S(v)||)$, since this pruned BFS visits all nodes in $S(v)$ and their incident edges.
Then, the while loop of Lines~\ref{phase2-begin} -~\ref{phase2-end} identifies all the affected targets $T(u)$ with a pruned BFS. This part (excluding Lines~\ref{loop-pred-begin} -~\ref{loop-pred-end}) requires $\Theta(||T(u)||)$ operations, since all affected targets and their incident edges are visited. 
In Lines~\ref{loop-pred-begin} -~\ref{loop-pred-end}, for each affected node $t \in T(u)$, all the affected sources of the predecessor $p(y)$ of $y$ are scanned. This part requires in total $\Theta(\sum_{t \in T(u)} |S(p(y))|)$ operations.
\end{proof}
Notice that, since $|S(p(y))|$ is $O(n)$ and both $||T(u)||$ and $||S(v)||$ are $O(n+m)$, the worst-case complexity of Algorithm~\ref{edge_insertion} is $O(n^2)$. 
To show the complexity of the dependency update described in Algorithm~\ref{algo:dyn_bc}, let us introduce, for a given source node $s$, the set $\tau(s) := T(s) \cup \{w \in V : \Delta_{s, \bullet}(w) > 0 \}$. Then, the following theorem holds.
\begin{theorem}
\label{theo_compl2}
The running time of Algorithm~\ref{algo:dyn_bc} is $\Theta(|| \tau(s) || + |\tau(s)| \log |\tau(s)|)$ for weighted graphs and $\Theta(|| \tau(s) ||)$ for unweighted graphs. 
\end{theorem}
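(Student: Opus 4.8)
The plan is to reduce everything to a precise description of which nodes are ever inserted into the priority queue $PQ_s$ in Algorithm~\ref{algo:dyn_bc}. Write $R_s$ for the set of all nodes that at some point enter $PQ_s$ (including those placed there during the augmented APSP update of Algorithm~\ref{edge_insertion}). The crux of the proof is to show $R_s = \tau(s)$; once this is available, the rest is a routine charging argument against the extended size $||\tau(s)||$, treating the weighted and unweighted cases separately according to the data structure used for $PQ_s$.

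I would prove $R_s = \tau(s)$ by two inclusions. For $R_s \subseteq \tau(s)$: the nodes present in $PQ_s$ before the while loop are exactly $T(s)\subseteq \tau(s)$. Any other node $y$ enters $PQ_s$ only in Line~\ref{bc-enqueue}, i.e. when $y\in P_s(w)$ for some node $w$ that has just been extracted and hence already lies in $R_s$; by induction $w\in\tau(s)$. A predecessor $y\in P_s(w)$ is an ancestor of $w$ in the SSSP DAG of $s$, and $y\neq s$ by the guard, so $y$ is an internal vertex of some shortest $s$--$w$ path, giving $\sigma_{sw}(y)\ge 1$. If $w\in T(s)$ then $\Delta_{s,\bullet}(y)\ge \sigma_{sw}(y)/\sigma_{sw}>0$; if $w\notin T(s)$, then $w\in\tau(s)$ means $w$ is an ancestor of some $t\in T(s)$, hence so is $y$, and again $\Delta_{s,\bullet}(y)>0$. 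In all cases $y\in\tau(s)$. For $\tau(s)\subseteq R_s$: nodes of $T(s)$ are inserted at the start, and for any $w$ with $\Delta_{s,\bullet}(w)>0$ there is a shortest path from $s$ through $w$ to some $t\in T(s)$; walking this path backwards from $t$ and using that each successive vertex (which is $\neq s$) is a predecessor of the previous one, an induction shows that each of these vertices, and in particular $w$, is eventually inserted into $PQ_s$. This step relies on the standard invariants that nodes leave $PQ_s$ in order of non-increasing $d(s,\cdot)$ and that each node is extracted exactly once (guaranteed by the membership test in Line~\ref{bc-enqueue} and the fact that predecessors have strictly smaller distance); I would establish these once as the correctness statement and reuse them here and in the application of Theorem~\ref{theo}.

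Given $R_s=\tau(s)$, the running time is accounted as follows. Each $w\in\tau(s)$ is extracted exactly once; the extraction and the update in Line~\ref{dec-betw} cost $O(1)$, and the loop of Lines~\ref{bc-loop-begin}--\ref{bc-loop-end} scans the incoming edges of $w$ with $O(1)$ work per edge (Lines~\ref{line:theo1}--\ref{line:theo2}) plus at most one insertion into $PQ_s$. Summing over $\tau(s)$, the number of edge scans is $\sum_{w\in\tau(s)}|\{y:(y,w)\in E\}|\le ||\tau(s)||$ and the number of priority-queue operations is at most $2|\tau(s)|$, with the "$y\notin PQ_s$" test implemented in $O(1)$ via a flag array. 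In the weighted case $PQ_s$ is a binary heap on at most $|\tau(s)|$ elements, so each operation costs $O(\log|\tau(s)|)$ and the total is $O(||\tau(s)||+|\tau(s)|\log|\tau(s)|)$; in the unweighted case $PQ_s$ is a bucket list indexed by $d(s,\cdot)$, every operation is $O(1)$, and the bound collapses to $O(||\tau(s)||)$. The initialization of $\Delta_{s,\bullet}$ is performed lazily (only entries written in the previous run are reset), contributing $O(|\tau(s)|)$, which is absorbed. The matching $\Omega$ bounds follow because every node of $\tau(s)$ and its incident edges are necessarily visited, and in the weighted case the $|\tau(s)|$ heap operations on distinct keys cost $\Omega(|\tau(s)|\log|\tau(s)|)$ in the comparison model; this yields the stated $\Theta$ bounds.

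The step I expect to be the main obstacle is the identification $R_s=\tau(s)$, and specifically the direction $R_s\subseteq\tau(s)$: one must check that the two-case contribution rule of Lines~\ref{line:theo1}--\ref{line:theo2} never enqueues a node whose dependency change is zero — that the back-propagation stays within the DAG-ancestors of $T(s)$ — while still reaching all of them, and this argument has to be kept consistent with the processing-order invariant and with the lazy reinitialization of $\Delta_{s,\bullet}$ used to keep the initialization cost inside $O(||\tau(s)||)$.
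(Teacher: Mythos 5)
Your proof is correct and follows essentially the same route as the paper's: reduce the bound to showing that the set of nodes ever inserted into $PQ_s$ is exactly $\tau(s)$, then charge the per-node and per-edge work against $||\tau(s)||$ with a binary heap (weighted) or bucket list (unweighted). You are in fact more thorough than the paper, which only argues the inclusion $R_s \subseteq \tau(s)$ explicitly and omits the reverse inclusion, the lazy reinitialization of $\Delta_{s,\bullet}$, and the matching lower bound.
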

\begin{proof}
In the following, we assume a binary heap priority queue for weighted graphs and a bucket list priority queue for unweighted graphs. 
Then, the \texttt{extractMax()} operation in Line~\ref{extractMax} requires constant time for unweighted and logarithmic time for weighted graphs. Also, for each node extracted from $PQ$, all neighbors are visited in Lines~\ref{bc-loop-begin} - \ref{bc-loop-end}. Therefore, it is sufficient to prove that the set of nodes inserted into (and therefore extracted from) $PQ$ is exactly $\tau(s)$.
As we said in the description of Algorithm~\ref{algo:dyn_bc}, $PQ$ is initially populated with the nodes in $T(s)$. Then, all nodes $y$ inserted into $PQ$ in Line~\ref{bc-enqueue} are nodes that lied in at least one shortest path between $s$ and a node in $T(s)$ before the insertion. This means that there is at least one $t \in T(s)$ such that $\sigma_{st}(y) > 0$, which implies that $\Delta_{s, \bullet}(y) > 0$, by definition of $\Delta_{s, \bullet}(y)$.
\end{proof}
The running time necessary to increase the betweenness score of nodes such that $\Delta'_{s, \bullet} > 0$ can be computed analogously, defining $\tau'(s) = T(s) \cup \{w \in V : \Delta'_{s, \bullet}(w) > 0 \}$. Overall, the running time of the betweenness update score described in Section~\ref{sec:new-dep} is $\Theta(\sum_{s \in S} ||\tau(s) || + ||\tau'(s)||)$ for unweighted and $\Theta(\sum_{s \in S} ||\tau(s) || + ||\tau'(s)|| +  |\tau(s)| \log |\tau(s)| +  |\tau '(s)| \log |\tau '(s)|)$ for weighted graphs. Consequently, in the worst case, this is $O(nm)$ for unweighted and $O(n(m + n \log n))$ for weighted graphs, which matches the running time of \textsf{BA}. For sparse graphs, this is asymptotically faster than \textsf{KWCC}, which requires $\Theta(n^3)$ operations in the worst case.

\section{Experimental Results}
\label{sec:experiments}
\paragraph*{Implementation and settings} 
For our experiments, we implemented \textsf{BA}, \textsf{KDB}, \textsf{KWCC}, and our new approach, which we refer to as \textsf{iBet} (from Incremental Betweenness). All the algorithms were implemented in C++, building on the open-source \textit{NetworKit}
framework~\cite{DBLP:journals/corr/StaudtSM14}. 
All codes are sequential; they were executed on a 64bit machine with 2 x 8 Intel(R) Xeon(R) E5-2680 cores at 2.7 GHz with 256 GB RAM with a single thread on a single CPU. 
\paragraph*{Data sets and experimental design} For our experiments, we consider a set of real-world networks belonging to different domains, taken from SNAP~\cite{snapnets}, KONECT~\cite{DBLP:conf/www/Kunegis13}, and LASAGNE (\url{piluc.dsi.unifi.it/lasagne
}). Since \textsf{KDB} cannot handle weighted graphs and the pseudocode given in~\cite{kourtellis2014scalable} is only for undirected graphs, all graphs used in the experiments are undirected and unweighted. 
The networks are reported in Table~\ref{table:edge_undirected}. Due to the time required by the static algorithm and the memory constraints of all dynamic algorithms ($\Theta(n^2)$), we only considered networks with up to about 26000 nodes.

To simulate real edge insertions, we remove an existing edge from the graph (chosen uniformly at random), compute betweenness on the graph without the edge and then re-insert the edge, updating betweenness with the incremental algorithms (and recomputing it with \textsf{BA}). 
For all networks, we consider 100 edge insertions and report the average over these 100 runs.

\paragraph*{Experimental results} 
\begin{table}[h!]
   \caption{The table shows the average time taken by the static algorithm \textsf{BA} and the average speedups on \textsf{BA} of the incremental algorithms (geometric means).  The best result of each row is shown in bold font.}
\begin{center}
\begin{small}
 \vspace{-1ex}

  \begin{tabular}{ | l | r | r | l | r | r | r | r |}
    \cline{6-8}
\multicolumn{5}{c|}{} & \multicolumn{3}{c|}{Speedup on \textsf{BA}} \\
\hline
Graph	&	Nodes	&	Edges	& Type &	 \textsf{BA} [s]	&	\textsf{iBet}	&	\textsf{KDB}	&	\textsf{KWCC}\\
\hline
\texttt{HC-BIOGRID}	&	4 039	&	10 321		& bio. network  &	 6.06  & \textbf{77.87}   &  10.91  &  18.33\\   
\texttt{Mus-musculus}	&	4 610	&	5 747	& bio. network	&	3.32  &  \textbf{119.23}   &  9.40  &  11.21\\
\texttt{Caenor-elegans}	&	4 723	&	9 842  	&  metabolic	&  5.12  &   \textbf{130.89}   &  9.58  &  23.64\\
\texttt{ca-GrQc}	&	5 241	&	14 484	                 & coauthorship &	4.19  &  \textbf{206.55}   &  7.53  &  14.28\\
\texttt{advogato}	&	7 418	&	42 892		& social	&	14.65  &  \textbf{295.39}   &  27.69  &  18.45\\
\texttt{hprd-pp}	&	9 465	&	37 039			&  bio. network     &	30.29  &   \textbf{304.24}   &  11.33  &  45.90\\
\texttt{ca-HepTh}	&	9 877	&	25 973		& coauthorship	&	21.06  &   \textbf{199.04}   &  8.24  &  34.03\\
\texttt{dr-melanogaster}	&	10 625	&	40 781	& bio. network	&	40.76  &  \textbf{235.54}   &  7.94  &  48.57\\
\texttt{oregon1-010526}	&	11 174	&	23 409	& aut. systems	&	24.43  & \textbf{237.47}   &  15.20  &  21.64\\
\texttt{oregon2-010526}	&	11 461	&	32 730	& aut. systems	&	 30.07  &  \textbf{113.10}   &  17.23  &  23.08\\
\texttt{Homo-sapiens}	&	13 690	&	61 130	& bio. network	&	68.58  &   \textbf{237.61}   &  10.29  &  58.67\\
\texttt{GoogleNw}	&	15 763	&	148 585		& hyperlinks	&	90.42  &  \textbf{577.49}   &  90.01  &  33.80\\
\texttt{dip20090126}	&	19 928	&	41 202		& bio. network	&	115.56  & \textbf{51.54}   &  5.38  &  5.73\\
\texttt{as-caida20071105}	&	26 475	&	53 381	& aut. systems 	&	154.36  &   \textbf{173.90}   &  18.66  &  19.65\\ \hline
Geometric mean	&		&		&   &	  &   \textbf{179.1}   &  13.0  &  22.9\\
\hline
  \end{tabular}
  \end{small}
\end{center}
  \label{table:edge_undirected}
\end{table}
In Table~\ref{table:edge_undirected} the running times of \textsf{BA} for each graph and the speedups of the three incremental algorithms on \textsf{BA} are reported. The last line shows the geometric mean of the speedups on \textsf{BA} over all tested networks.
Our new method \textsf{iBet} clearly outperforms the other two approaches and is always faster than both of them. On average, \textsf{iBet} is faster than \textsf{BA} by a factor 179.1, whereas \textsf{KDB} by a factor 13.0 and \textsf{KWCC} by a factor 22.9.

Figure~\ref{fig:dep} compares the APSP update (on the left) and dependency update (on the right) steps for the \texttt{oregon1-010526} graph (a similar behavior was observed also for the other graphs of Table~\ref{table:edge_undirected}. 
On the left, the running time of the APSP update phase of the three incremental algorithms on 100 edge insertions are reported, sorted by the running time taken by \textsf{KDB}. It is clear that the APSP update of \textsf{iBet} is always faster than the competitors. This is due to the fact that \textsf{iBet} processes the edges between the affected targets only once instead of doing it once for each affected source as both \textsf{KDB} and \textsf{KWCC}. Also, the running time of the APSP update of \textsf{KDB} varies significantly. On about one third of the updates, it is basically as fast as \textsf{KWCC}. This means that in these cases, \textsf{KDB} only visits a small amount of nodes in addition to the affected ones (see Figure~\ref{fig:comparison} and its explanation). However, in other cases \textsf{KDB} can be much slower, as shown in the figure.

On the right of Figure~\ref{fig:dep}, the running times of the dependency update step are reported. Also for this step, \textsf{iBet} is faster than both \textsf{KDB} and \textsf{KWCC}. However, for this part there is not a clear winner between \textsf{KWCC} and \textsf{KDB}. In fact, in some cases \textsf{KDB} needs to process additional nodes in order to recompute dependencies, whereas \textsf{KWCC} only processes nodes in the shortest paths between affected nodes. However, \textsf{KDB} processes each node at most once for each source node $s$, whereas  \textsf{KWCC} might process the same node several times if it lies in several shortest paths between $s$ and other nodes (we recall that the worst-case running time of \textsf{KWCC} is $O(n^3)$, whereas that of \textsf{KDB} is $O(n m)$). Notice also that in some rare cases \textsf{KDB} is slightly faster than \textsf{iBet} in the dependency update. This is probably due to the fact that our implementation of \textsf{iBet} is based on a priority queue, whereas \textsf{KDB} on a bucket list.

\begin{figure}[bt]
\begin{center}
\includegraphics[width = 0.48\textwidth]{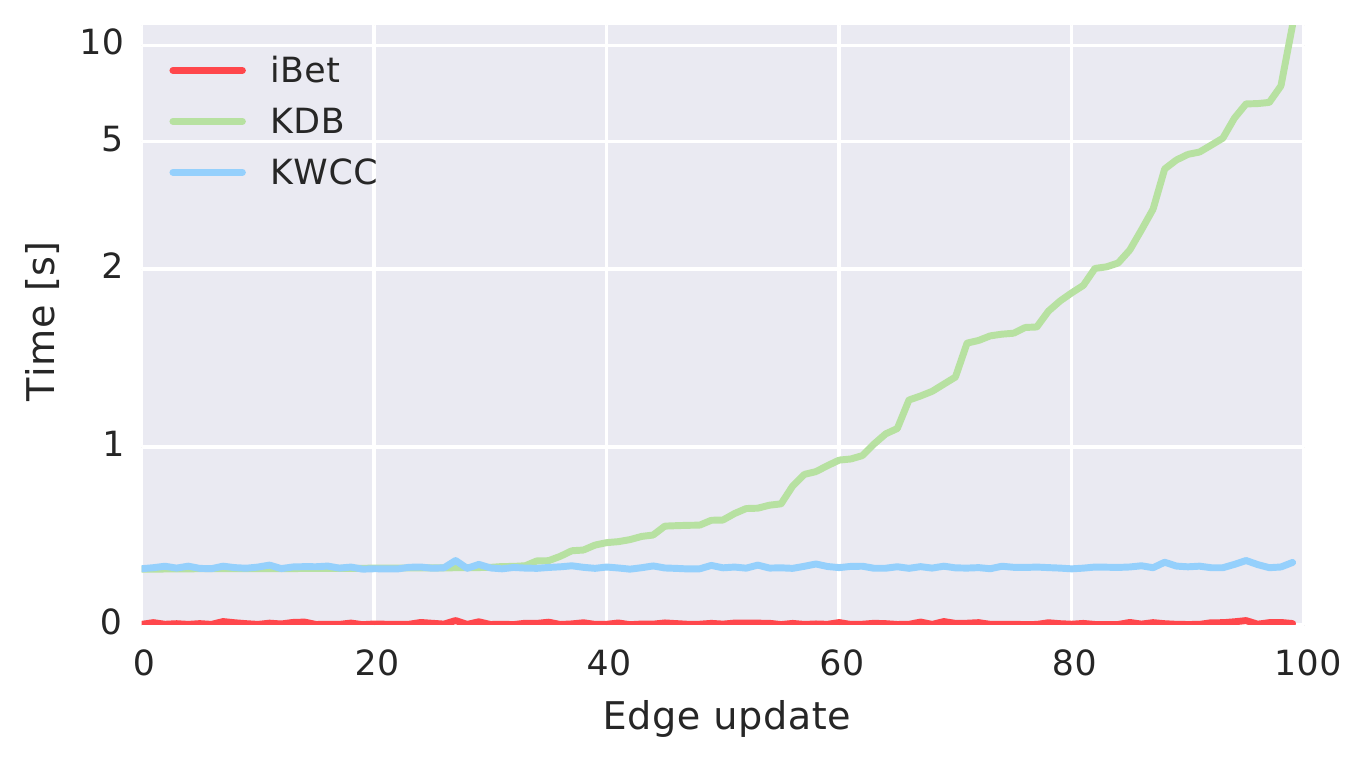}
\includegraphics[width = 0.48\textwidth]{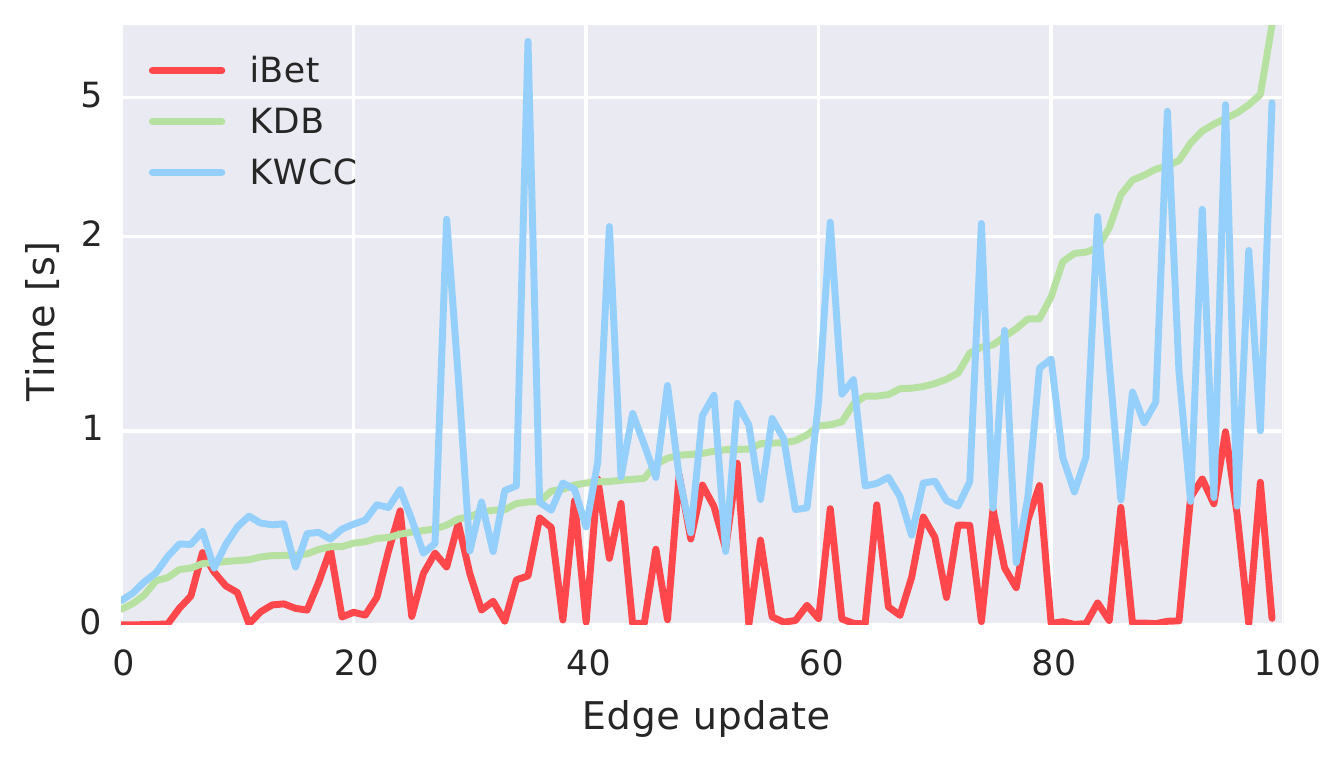}
\caption{Running times of \textsf{iBet}, \textsf{KDB} and \textsf{KWCC} for 100 edge updates on \texttt{oregon1-010526}. Left: times for the APSP update step. Right: times for the dependency update step.}
\label{fig:dep}
\end{center}
\vspace{-3ex}
\end{figure}

\begin{figure}[tb]
\begin{center}
\includegraphics[width = 0.48\textwidth]{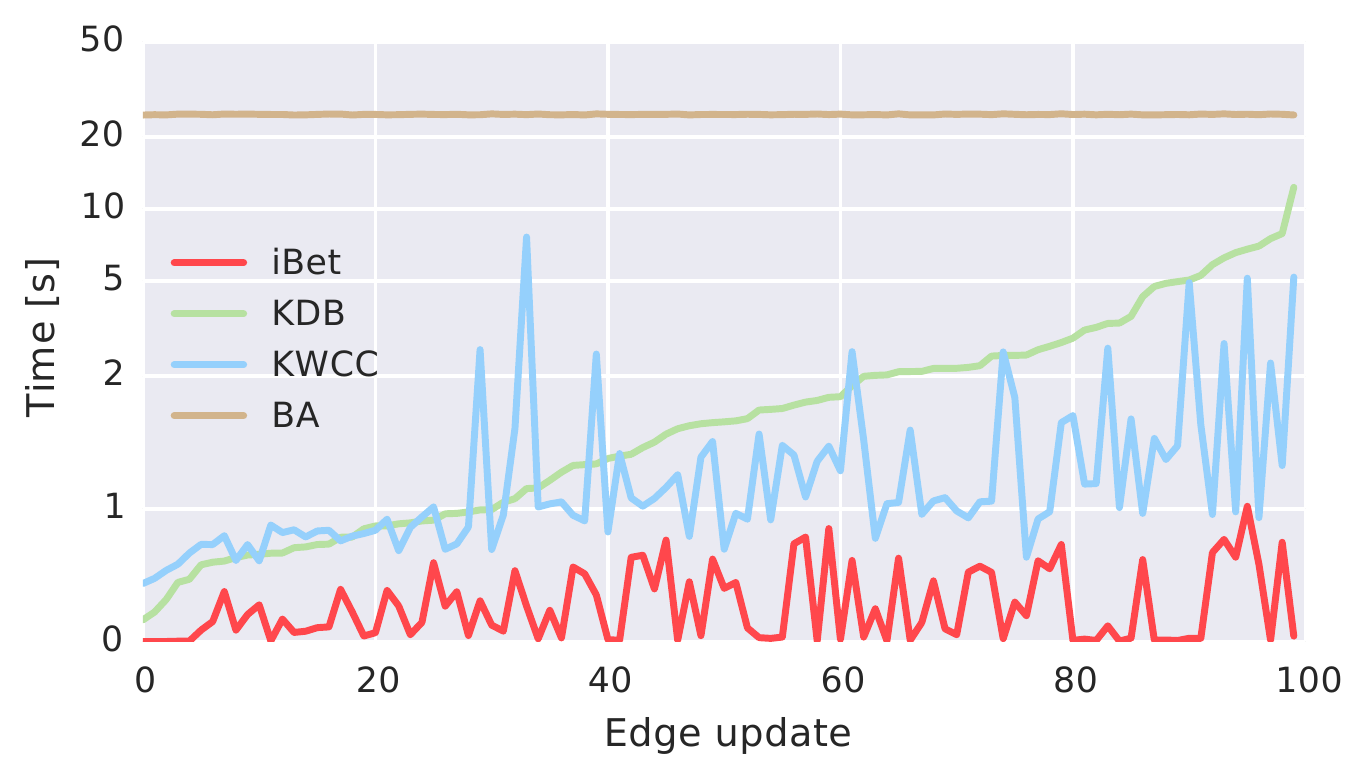}
\includegraphics[width = 0.5\textwidth]{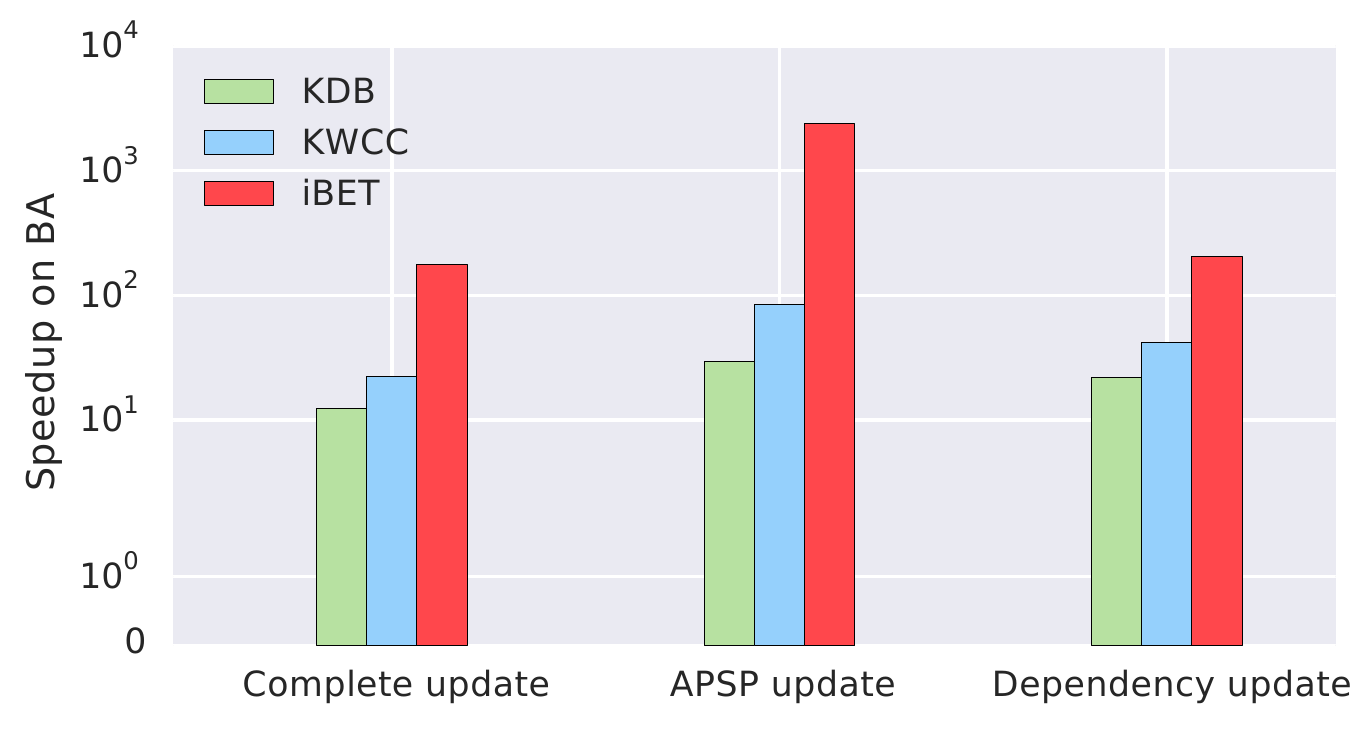}
\caption{Left: Running times of \textsf{iBet}, \textsf{KDB}, \textsf{KWCC} and \textsf{BA} on the \texttt{oregon1-010526} graph for 100 edge updates. Right: Average speedups on recomputation with \textsf{BA} (geometric mean) over all networks of Table~\ref{table:edge_undirected} for the three incremental algorithms. The column on the left shows the speedup of the complete update, the one in the middle the speedup of the APSP update only and the one on the right the speedup of the dependency update only.}
\label{fig:global}
\end{center}
\vspace{-4ex}
\end{figure}

Figure~\ref{fig:global} on the left reports the total running times of \textsf{iBet}, \textsf{KDB}, \textsf{KWCC} and \textsf{BA} on \texttt{oregon1-010526}. Although the running times vary significantly among the updates, \textsf{iBet} is always the fastest among all algorithms. On the contrary, there is not always a clear winner between \textsf{KDB} and \textsf{KWCC}.
On the right, Figure~\ref{fig:global} shows the geometric mean of the speedups on recomputation for the three incremental algorithms, considering the complete update, the APSP update step only and the dependency update step only, respectively. \textsf{iBet} is the method with the highest speedup both overall and on the APSP update and dependency update steps separately, meaning that each of the improvements described in Section~\ref{sec:new-apsp} and Section~\ref{sec:new-dep} contribute to the final speedup. On average, \textsf{iBet} is a factor 82.7 faster than \textsf{KDB} and a factor 28.5 faster than \textsf{KWCC} on the APSP update step and it is a factor 9.4 faster than \textsf{KDB} and a factor 4.9 faster than \textsf{KWCC} on the dependency update step. Overall, the speedup of \textsf{iBet} on \textsf{KDB} ranges from 6.6 to 29.7 and is on average (geometric mean of the speedups) 14.7 times faster. The average speedup on \textsf{KWCC} is 7.4, ranging from a factor 4.1 to a factor 16.0.

\section{Conclusions and future work}
Computing betweenness centrality is a problem of great practical relevance. In this paper we have proposed and evaluated new techniques for the betweenness update after the insertion (or weight decrease) of an edge.
Compared to other approaches, our new algorithm is easy to implement and significantly reduces the number of operations of both the APSP update and the dependency update. Our experiments on real-world networks show that our approach outperforms existing methods, on average approximately by one order of magnitude.

Future work might include parallelization for further acceleration. Furthermore, we plan to extend our techniques also to the decremental case (where an edge can be deleted from the graph or its weight can be increased) and to batch updates, where several edge updates might occur at the same time.

Although dynamic betweenness algorithms can be much faster than recomputation, a major limitation for their scalability is their memory requirement of $\Theta(n^2)$. An interesting research direction is the design of scalable dynamic algorithms with a smaller memory footprint.

Our implementations are based on \textit{NetworKit}~\cite{DBLP:journals/corr/StaudtSM14}, the open-source framework for network analysis, and we will publish our source code in upcoming releases of the package.
\newpage
\bibliographystyle{abbrv}
\bibliography{p46-bergamini}

\end{document}